\documentclass[conference]{IEEEtran}
\IEEEoverridecommandlockouts
\usepackage{mathrsfs}
\usepackage{amsfonts}
\usepackage{ifpdf}
\usepackage{cite}
\ifCLASSINFOpdf
\usepackage[pdftex]{graphicx}
\else \fi
\usepackage{amsmath}
\usepackage{array}
\usepackage{mdwmath}
\usepackage{mdwtab}
\usepackage{eqparbox}
\usepackage[tight,footnotesize]{subfigure}
\usepackage{algorithmic}
\usepackage{algorithm}
\usepackage{amsmath}
\usepackage{epsfig}
\usepackage{ae}
\usepackage{amssymb}
\usepackage{times}
\usepackage{amsmath}   
\usepackage{booktabs}  
\usepackage{threeparttable} 
\usepackage{multirow}       
\usepackage{xcolor}
\usepackage{amsmath,amssymb,amsfonts}
\usepackage{graphicx}
\usepackage{textcomp}
\usepackage{xcolor}
\usepackage[normalem]{ulem}
\usepackage{amssymb}

\newtheorem{theorem}{\textbf{Theorem}}

\newtheorem{definition}{\textbf{Definition}}

\newtheorem{proposition}{\textbf{Proposition}}

\newtheorem{proof}{Proof}

\def\BibTeX{{\rm B\kern-.05em{\sc i\kern-.025em b}\kern-.08em
		T\kern-.1667em\lower.7ex\hbox{E}\kern-.125emX}}

\newcommand*{\QEDB}{\hfill\ensuremath{\square}}  

\hyphenation{net-works}
\begin{document}
	
\title{Differentially Private Deep Q-Learning for Pattern Privacy Preservation in MEC Offloading}


\author{
	\IEEEauthorblockN{Shuying Gan$^{\dag}$, Marie Siew$^{\P}$, Chao Xu$^{\dag}$$^{\S}$, Tony Q.S. Quek$^{*}$}
	\IEEEauthorblockA{$^{\dag}$School of Information Engineering, Northwest A\&F University, Yangling, Shaanxi, China\\
		$^{\P}$Department of Electrical and Computer Engineering, Carnegie Mellon University, USA\\
		$^*$Information System Technology and Design Pillar, Singapore University of Technology and Design, Singapore\\
		Email: \{ganshuying99, cxu\}@nwafu.edu.cn, msiew@andrew.cmu.edu, tonyquek@sutd.edu.sg
	}
	\thanks{$^{\S}$Corresponding author: Chao Xu, cxu@nwafu.edu.cn}
}

\maketitle
\begin{abstract}
Mobile edge computing (MEC) is a promising paradigm to meet the quality of service (QoS) requirements of latency-sensitive IoT applications. However, attackers may eavesdrop on the offloading decisions to infer the edge server's (ES's) queue information and users' usage patterns, thereby incurring the pattern privacy (PP) issue. Therefore, we propose an offloading strategy which jointly minimizes the latency, ES's energy consumption, and task dropping rate, while preserving PP. Firstly, we formulate the dynamic computation offloading procedure as a Markov decision process (MDP). Next, we develop a \underline{D}ifferential \underline{P}rivacy \underline{D}eep \underline{Q}-learning based \underline{O}ffloading (DP-DQO) algorithm to solve this problem while addressing the PP issue by injecting noise into the generated offloading decisions. This is achieved by modifying the deep Q-network (DQN) with a Function-output Gaussian process mechanism. We provide a theoretical privacy guarantee and a utility guarantee (learning error bound) for the DP-DQO algorithm and finally, conduct simulations to evaluate the performance of our proposed algorithm by comparing it with greedy and DQN-based algorithms.

\end{abstract}
\begin{IEEEkeywords}
mobile edge computing, computation offloading, differential privacy, deep reinforcement learning.
\end{IEEEkeywords}

\vspace{-0.5em}
\section{Introduction}
The Internet of Things (IoT) integrates a large number of pervasive, connected, and smart devices in the physical world via the Internet, enabling various smart IoT applications, e.g., deep-learning-driven smart video surveillance, flying ad hoc networks for precision agriculture, and e-health \cite{7488250}. These IoT applications heavily rely on computationally intensive machine learning algorithms, which are typically resource-hungry and cannot be readily supported by resource-constrained mobile devices (MDs) \cite{8270633}.
Meanwhile, offloading computation to the cloud server (CS) may incur a high latency cost due to the long transmission distance \cite{8030322}.
In light of this, a new computing paradigm called mobile edge computing (MEC) has been proposed, to meet the quality of service (QoS) requirements of latency-sensitive IoT applications. Here, edge servers (ESs) are placed in close proximity to MDs\cite{8736011}, at the network edge (e.g. at base stations or access points). 
For an MEC system, it is of great importance to design efficient computation offloading strategies through achieving appropriate cooperation among ESs and CSs\cite{8664595,9382409,9448034}.
While efficient computation offloading algorithms have been developed for MEC systems under various scenarios \cite{8664595,9382409,9448034}, privacy issues, mainly derived from computation cooperation and data sharing in computation offloading, were ignored. This would potentially provide attackers with the opportunity for privacy mining \cite{9123504}.

Recently, researchers begin to focus on addressing privacy issues in MEC computation offloading, and the available work can be broadly categorized into two types: i.e., data privacy protection \cite{8966451,9461063} and pattern privacy (PP) protection \cite{9669093}. The data privacy issue refers to the case where attackers directly steal users' private information (e.g., account passwords, email addresses, home addresses, etc.) from the transmitted data during computation offloading. Authors in \cite{8966451} designed a sampling perturbation encryption strategy to protect data privacy against attackers during computation offloading. In \cite{9461063}, to protect data privacy, a trustworthy access control mechanism was developed by utilizing smart contracts.
In contrast to data privacy, the PP issue refers to the case where attackers eavesdrop on the offloading patterns and decisions to infer system information, e.g., the size and required computational resources of offloading tasks. Based on this, attackers could further infer users' private information, such as their identities and usage patterns.
To the best of our knowledge, there are very few studies on PP-preserving offloading in MEC systems. In \cite{9669093}, to achieve PP protection, the authors proposed to disguise users' offloading tasks by deliberately generating redundant tasks, sent along with the actual tasks to MEC servers. It achieved a tradeoff between the computation rate and privacy preservation
but inevitably incurred extra computation and communication costs from the generation, transmission, and processing of the redundant tasks.


In this work, a novel PP-preserving dynamic computation offloading strategy is devised for MEC systems.
Particularly, we consider a dynamic computation offloading problem that jointly minimizes the latency, ES's energy consumption, and task dropping rate, and formulates it as a Markov Decision Process (MDP). To solve this problem while addressing the PP issue, we propose a \underline{D}ifferential \underline{P}rivacy \underline{D}eep \underline{Q}-learning based \underline{O}ffloading (DP-DQO) algorithm. The core idea of DP-DQO is to inject noise after the output layer of the deep Q-network (DQN) to make adjacent state-action pairs indistinguishable, where the "distance" between two state-action pairs is evaluated by the difference of their corresponding rewards. This strategy adds noise to the generated offloading decisions, which prevents the attacker from inferring information on the edge queue and users' usage patterns, thereby achieving PP protection. For our proposed DP-DQO algorithm, we provide theoretical analysis on both its privacy guarantee and utility guarantee (learning error bound). Finally, we evaluate DP-DQO's performance through simulations, showing that, by suitably choosing the noise level, DP-DQO can achieve both the high return and PP protection during the computation offloading process.
 
\vspace{-0.5em}

\section{System model and problem formulation} \label{Sec:Section 2}
\subsection{Network Model}
As shown in Fig.1, we consider an MEC system consisting of $N$ mobile devices (MDs) served by an edge server (ES) and a cloud server (CS), where the set of MDs is denoted by $\mathcal N = \{1, 2, \ldots, N\}$. Compared with the CS, the ES is closer to MDs but does not have sufficient computing resources to serve all MDs simultaneously. In this paper, we consider a time-slotted system $\mathcal T = \{1, 2, \ldots, T\}$ where each timeslot's duration is $\Delta_T$ (in seconds). At the beginning of each time slot, the ES may receive offloading tasks from MDs. We consider the task arrivals from MD $n$ following an independent Poisson process with parameter $\lambda _n$ \cite{8457190}. At the beginning of time slot $t$, if there is an offloading task sent from MD $n$, we denote it by $\mathbf x_n(t) \triangleq \left(\rho_n(t), \beta_n(t)\right)$ with $\rho_n(t)$ and $\beta_n(t)$ respectively denoting its data size and required CPU cycles for task computation. For a generic task $\mathbf x_n(t) \triangleq \left(\rho_n(t), \beta_n(t)\right)$, we assume that it is atomic and can not be divided further. Besides, we consider that the data size $\rho_n(t)$ and the required CPU cycles $\beta_n(t)$ follow independent uniform distributions \cite{8664595}.
\begin{figure} [!t]
	\centering
	\leavevmode \epsfxsize=2.8 in \epsfbox{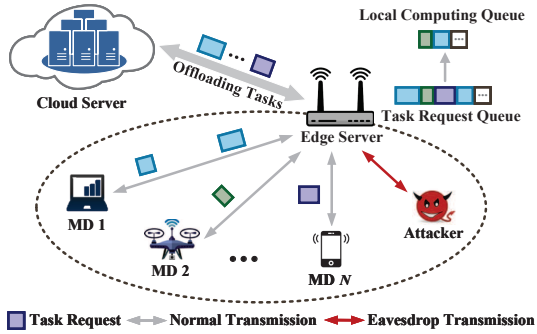}
	\vspace{-1em}
	\centering \caption{Computation offloading model for MEC.} \label{Fig:System_Model}
	\vspace{-1.5em}
\end{figure}

To buffer the arriving offloading tasks, the ES maintains a task request queue (TRQ) whose size is denoted by $q^{TR}_{max}$. The newly arriving tasks would be buffered in the TRQ if there is enough space there, and dropped otherwise. 
After the TRQ is updated in time slot $t$, we denote $I(t)$ as the total number of offloading tasks that have arrived at the ES, and $K(t)$ as the combined size of the tasks buffered at the ES, i.e., $K(t) \leq q^{TR}_{max}$. For the updated TRQ, the ES would deal with the task appearing at its head, denoted by $\mathbf x_H(t)\triangleq \left(\rho_H(t), \beta_H(t)\right)$, deciding between computing it locally (at the ES), or delivering it to the CS via one of $M$ orthogonal channels. At the beginning of time slot $t$, let $\omega(t)$ denote the number of available orthogonal channels, i.e., $\omega(t) \leq M$. Besides, let $A(t)\in\{0, 1\}$ denote the ES's computation offloading decision at time slot $t$, i.e., $A(t)=1$ if the task is offloaded to the CS for processing, and $A(t) = 0$ otherwise. It is worth noting that if there is no free channel between the ES and CS at the beginning of time slot $t$, then the ES has to process the task locally, i.e., $A(t) = 0$ if $\omega(t) =0$.


The ES maintains another queue for the tasks to be processed locally, called the local computing queue (LCQ), according to the first-come-first-serve (FCFS) discipline. We denote LCQ's size by $q^{LC}_{max}$. At the beginning of time slot $t$, we denote the total number of CPU cycles required by the tasks waiting in the LCQ by $\hat{P}(t)$, and the combined size of these tasks by $\hat{K}(t)$ with $\hat K(t) \leq q^{LC}_{max}$. In each time slot, the arriving task $\mathbf x_H(t)\triangleq \left(\rho_H(t), \beta_H(t)\right)$ (if $A(t) = 0$) would be buffered only if there is enough space in the LCQ, i.e., $\hat K(t)+ \rho_H(t)\leq q^{LC}_{max}$. At the end of time slot $t$, we denote by $I_d(t)$ the number of tasks that are dropped due to the overflow of the TRQ or LCQ.

In this paper, we aim at optimizing the computation offloading decisions to jointly minimize the incurred latency, energy consumption of the ES, and task dropping rate. Note that the waiting time of each task in the TRQ is determined by the task arrival processes and is independent of the ES's offloading decisions. 
As such, we equivalently focus on minimizing the latency, ES's energy consumption, and task dropping rate of the tasks appearing at the head of the TRQ, while ignoring their waiting time in the TRQ.
\vspace{-1em}

\subsection{Latency and Energy Consumption Model}
For the task at the head of the TRQ in time slot $t$, i.e., $\mathbf x_H(t)\triangleq \left(\rho_H(t), \beta_H(t)\right)$, the incurred latency, and energy consumption are determined by the offloading decision made by the ES. In this subsection, we specify the latency and energy consumption models regarding local computing at the ES and offloading computation to the CS, respectively.

$\textbf{1) Local Computing at the ES}$: For this case, the latency $L_H^{e}(t)$ consists of two parts, i.e., the waiting time in the LCQ $L_{H}^{w}(t)$ and the computation execution time $L_{H}^{c}(t)$, i.e.,
\vspace{-0.5em}
\begin{align} \label{Eq:Edge_Total_Latency}
	L_H^{e}(t)=L_{H}^{w}(t)+L_{H}^{c}(t) = \left(\hat{P}(t)+\beta_H(t)\right)/{f_{\mathrm{e}}},
\end{align}
in which the waiting time $L_{H}^{w}(t)$ is the sum of the computation execution time of all tasks before the task $\mathbf x_H(t)$ in the LCQ, i.e., $L_{H}^{w}(t) = \hat{P}(t)/{f_{\mathrm{e}}}$ with $f_{\mathrm{e}}$ (in CPU cycles per second) denoting the ES's computation capacity.
Meanwhile, the energy consumption for the local computing at the ES can be expressed as \cite{9093962} 
\vspace{-1em}
\begin{align} \label{Eq:Edge_computing_Energy}
	E_{H}^{e}(t)=\kappa_1\left(f_{\mathrm{e}}\right)^{2} \beta_H(t),
\end{align}
where $\kappa_1$ is an effective capacitance coefficient dependent on the ES's chip architecture, and $\kappa_1\left(f_{\mathrm{e}}\right)^{2}$ denotes the energy consumption per CPU cycle for the local computing.

$\textbf{2) Offloading Computation to the CS}$: For wireless MEC systems, the task transmission time from the ES to the CS is generally much larger compared to the computation time at the CS \cite{9382409}. Therefore, by ignoring the task computation time, we specify the offloading latency $L_{H}^{o}(t)$ as	
\vspace{-0.5em}
\begin{align} \label{Eq:Off_Total_Latency}
	L_{H}^{o}(t)={\rho_H(t)}/{r^{tr}}.
\end{align}
Wherein, $r^{tr}$ denotes the transmission rate from the ES to the CS, which is considered to be constant \cite{8387798}. Besides, the energy consumption of the ES $E_H^{o}(t)$ can be expressed as
 \begin{align} \label{Eq:Off_Total_Energy}
	E_H^{o}(t)=p^{tr}L_{H}^{o}(t),
\end{align}
where $p^{tr}$ denotes the uplink transmission power for the ES.
	
\subsection{Dynamic Computation Offloading Problem Formulation}
Firstly, to balance between the incurred latency and energy consumption, we define the task execution cost $C_0(t)$ of the offloading decision in time slot $t$, i.e, $A(t)$, as
\vspace{-0.3em}	
\begin{equation} \label{Eq:Cost_Function}
	C_0(t)=\left\{\begin{array}{l}
		L_{n}^{e}(t)+\psi E_{n}^{e}(t), \quad \text { if } A(t)=0 \\
		L_{n}^{o}(t)+\psi E_n^{o}(t), \quad \text { if } A(t)=1
	\end{array}\right.
\end{equation}
where $\psi$ is a parameter introduced to balance the trade-off between lowering the latency and reducing the energy consumption. Meanwhile, due to the limited buffer size and channel resources, some tasks may be dropped from TRQ or LCQ. Bearing this in mind, to further reduce the number of dropped tasks, we define the overall cost in time slot $t$ as
\begin{align} \label{Eq:Total_Cost}
	C(t) =C_0(t)/\big[1-I_0(t)\big]=C_0(t)/\big[1-{I_d(t)}/{I(t)}\big]
\end{align}
where $I_0(t)={I_d(t)}/{I(t)}$ denotes the task dropping rate at the end of the time slot. 

In light of this, the dynamic computation offloading problem can be formulated as
\vspace{-0.3em}	
\begin{align} \label{Eq:Utility_Fun}
	\mathbf{P}: & \min _{\mathbf{A}^{T}} \lim_{T \rightarrow \infty} \sum\nolimits_{t=1}^{T} \gamma^{t-1} C(t) \\ \label{Eq:Utility_Obj}
	& \textbf { s.t. } A(t) \in\{0,1\}, \quad \forall t \in \mathcal{T}\\
	& \qquad \omega(t) \leq M, \quad \forall t \in \mathcal{T}
\end{align}
where $\mathbf{A}^{T}=(A(1), A(2),\ldots, A(T))$ denotes the sequence of offloading decisions made by the ES from time slot 1 to $T$, and the discount factor $\gamma \in [0,1)$ is introduced to give importance to the present cost and to make the long-term cumulative cost finite. Besides, constraint (9) indicates that there are at most $M$ orthogonal channels that can be used by the ES in each time slot.

Essentially, problem $\mathbf{P}$ can be formulated as an MDP and solved by reinforcement learning (RL)/ deep reinforcement learning (DRL) algorithms, where the action-value function (or policy) should be learned through trial-and-error interactions with the environment \cite{senadeera2022sympathy}.
In this case, by observing the offloading decisions made by the ES, the attacker (e.g., malicious MDs) can use tools such as inverse reinforcement learning \cite{arora2021survey} 
to infer, for instance, the action-value function. With the inferred action-value function and the observed offloading decision, the attacker can infer the ES's state information \cite{pan2019you}, e.g., the size of buffered tasks and the required CPU cycles. On this basis, the attacker could further obtain the usage patterns of MDs, causing the PP issue. For instance, considering an MEC system consisting of MDs with different usage patterns, the attacker can identify a specific MD from a set of anonymous ones. To address this issue, we propose a novel PP-preserving dynamic computation offloading algorithm, called DP-DQO, so as to minimize the incurred long-term cumulative cost (\ref{Eq:Utility_Fun}) while achieving PP protection during the computation offloading process.


\section{Pattern Privacy (PP) Preserving Solution} \label{Sec:Section 3}

\subsection{MDP Formulation and Differential Privacy}
We formulate the dynamic offloading problem as an MDP consisting of a tuple $(\mathbb S, \mathcal A, R(\cdot,\cdot))$ and depicted as follows:

\textbf{1) State space $\mathbb S$}: The state $\mathcal S(t)$ at the beginning of time slot $t$ is defined as $\mathcal S(t)=({K}(t),\hat{K}(t),\hat{P}(t),\omega(t))$, where ${K}(t)$ denotes the combined size of tasks buffered in the TRQ, $\hat{K}(t)$ and $\hat{P}(t)$ the combined size and required CPU cycles of tasks in the LCQ, and $\omega(t)$ the number of available channels. Denote the space of all possible states by $\mathbb S$.

\textbf{2) Action space $\mathcal A$}: The action of the agent at time slot $t$ is the offloading decision $A(t)$ i.e., $\mathcal A = \{0, 1\}$.

\textbf{3) Reward function $R(\cdot,\cdot)$}: In each time slot, the reward obtained by the agent is dependent on the state $\mathcal S(t)$ and executed action $A(t)$, which is defined as the negative of the overall cost, i.e., $R(\mathcal S(t),{A}(t))= -C(t)$.



In this work, we aim at deriving a policy $\pi^*$ that maximizes the discounted accumulative rewards (i.e., the return) given an initial state $\mathcal{S}(1)$, i.e.,
\vspace{-0.5em}
\begin{align} \label{Eq:Optimal_Policy}
	\nonumber \pi^* 
	\nonumber &= \mathop {\arg }\limits_ \pi \max \mathop {\lim }\limits_{T \to \infty }  \mathbb E \big[ \sum\nolimits_{t=1}^{T} \gamma^{t-1} R(\mathcal S(t),{A}(t)) \left| \mathcal S(1) \right. \big] \\
	& \mathop  =    \mathop {\arg }\limits_ \pi \min \mathop {\lim }\limits_{T \to \infty } \mathbb E \big[\sum\nolimits_{t=1}^{T} \gamma^{t-1}C(t)\big].
\end{align}
To solve this problem while addressing the PP issue, we devise a novel computation offloading algorithm by modifying the DQN-based DRL algorithm with the Function-output Gaussian process mechanism (FGPM).
Particularly, the core idea of our proposed algorithm is to inject the generated noise into the output of the DQN 
to provide 
differential privacy (DP) throughout the offloading process. Before formally presenting our proposed algorithm, we start by introducing some necessary definitions.

\begin{definition} 	\label{Th:differential privacy}
	 ($(\epsilon, \delta)$-DP \cite{dwork2014algorithmic}): A random mechanism $\mathcal{M}$: $\mathcal{R} \rightarrow \mathcal{U}$ with domain $\mathcal{R}$ and range $\mathcal{U}$ satisfies $(\epsilon, \delta)$-DP, if for any two adjacent inputs $r, r^{\prime} \in \mathcal{R}$ and for any subset of outputs $\mathcal{X} \subseteq \mathcal{U}$ we have
	\begin{align} \label{Eq:DP}
		\operatorname{Pr}\left[\mathcal{M}\left(r\right) \in \mathcal{X}\right] \leq e^{\epsilon} \operatorname{Pr}\left[\mathcal{M}\left(r^{\prime}\right) \in \mathcal{X}\right]+\delta
	\end{align}
\end{definition}
where $\epsilon$ denotes the privacy budget and $\delta$ the relaxation factor.
\begin{definition}\label{Th:DP_sen}
	For a random mechanism $\mathcal{M}$: $\mathcal{R} \rightarrow \mathcal{U}$, its sensitivity $\Delta_{\mathcal{M}}$ is defined as the maximum difference between the query results of two adjacent inputs $r, r^{\prime} \in \mathcal{R}$, i.e, \cite{dwork2014algorithmic}
	\begin{align} \label{Eq:DP_sen}
		\Delta_{\mathcal{M}}=\sup _{r, r^{\prime} \in \mathcal{R}}\left\|\mathcal{M}(r)-\mathcal{M}\left(r^{\prime}\right)\right\|
	\end{align}
\end{definition}
where $r, r^{\prime} \in \mathcal{R}$ denotes a pair of adjacent inputs, and $\|\cdot\|$ the an norm function defined on $\mathcal{U}$.

It is noteworthy that for a random mechanism $\mathcal{M}$, the sensitivity is used to quantitatively assess its effect on data privacy. Particularly, the larger sensitivity of the mechanism $\mathcal{M}$, the larger probability of privacy leakage, and the allocated privacy budget $\epsilon$ to make the mechanism $\mathcal{M}$ satisfy DP is also larger.
By considering the action-value function as a random mechanism and the state-action pairs as its inputs, it seems reasonable to develop a DP-based DRL algorithm to solve MDPs while making the learned action-value function satisfy DP. To achieve this, the traditional action-value function defined in RL needs to be modified with the noise mechanism \cite{dwork2014algorithmic}. In this paper, we choose the Function-output Gaussian process mechanism (FGPM) to deal with the continuous inputs, which extends the Gaussian mechanism with the reproducing kernel Hilbert space (RKHS).
\begin{definition} \label{Th:RHKS}
	(Function-output Gaussian process mechanism (FGPM) $F(\cdot)$ \cite{wang2019privacy}): Given a random mechanism $\mathcal{M}$: $\mathcal{R} \rightarrow \mathcal{U}$ with sensitivity $\Delta_{\mathcal{M}}$, the FGPM $F(\cdot)$ is defined as:
	\begin{align} \label{Eq:FG}
		F(r)={\mathcal{M}}(r)+g, \forall r\in \mathcal{R}
	\end{align}
where $g$ denotes the noise sampled from the Gaussian process noise $\mathcal{N}\left(0, \sigma^{2} K\right)$, $\mathcal{U}$ an RKHS with kernel function $K$, and $\sigma$ the noise level that can be specified according to the allocated privacy budget.
\end{definition}

\begin{proposition} \label{Th:G_DP}
	If the privacy budget $\epsilon$ satisfies $0<\epsilon<1$, the relaxation factor $\delta$ and the noise level $\sigma$ satisfy $\sigma \geq \sqrt{2 \ln (1.25 / \delta)} \Delta_{\mathcal{M}} / \epsilon$, then the FGPM $F(\cdot)$ satisfies $(\epsilon, \delta)$-DP.
\end{proposition}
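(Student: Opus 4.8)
The plan is to reduce the analysis of the infinite-dimensional Function-output Gaussian process mechanism to the classical scalar Gaussian mechanism of \cite{dwork2014algorithmic}, and then invoke the standard privacy-loss tail argument. First I would fix an arbitrary pair of adjacent inputs $r, r' \in \mathcal{R}$ and write $\mu = \mathcal{M}(r)$ and $\mu' = \mathcal{M}(r')$, so that the law of $F(r)$ is $\mathcal{N}(\mu, \sigma^2 K)$ and the law of $F(r')$ is $\mathcal{N}(\mu', \sigma^2 K)$ on the RKHS $\mathcal{U}$. The key structural fact, which I would borrow from \cite{wang2019privacy}, is that the shift $\mu - \mu'$ lies in the Cameron--Martin space associated with the covariance $\sigma^2 K$, and the Radon--Nikodym derivative between these two shifted Gaussian measures depends on the sample path only through a single real-valued linear functional; equivalently, distinguishing $F(r)$ from $F(r')$ is statistically equivalent to distinguishing $\mathcal{N}(0,\sigma^2)$ from $\mathcal{N}(\|\mu - \mu'\|_{\mathcal{U}}, \sigma^2)$ in one dimension. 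By Definition~\ref{Th:DP_sen}, $\|\mu - \mu'\|_{\mathcal{U}} \leq \Delta_{\mathcal{M}}$, so it suffices to establish $(\epsilon,\delta)$-DP for a one-dimensional Gaussian mechanism with sensitivity $\Delta_{\mathcal{M}}$ and standard deviation $\sigma$.

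Second, I would carry out the scalar privacy-loss computation. Writing the privacy loss random variable $\mathcal{L} = \ln\big( p_{\mathcal{N}(0,\sigma^2)}(Y) / p_{\mathcal{N}(\Delta_{\mathcal{M}},\sigma^2)}(Y) \big)$ with $Y \sim \mathcal{N}(0,\sigma^2)$, a direct calculation gives $\mathcal{L} \sim \mathcal{N}\big( \Delta_{\mathcal{M}}^2/(2\sigma^2),\, \Delta_{\mathcal{M}}^2/\sigma^2 \big)$. The goal is then to show $\Pr[\mathcal{L} \geq \epsilon] \leq \delta$, because any measurable $\mathcal{X}$ can be split into the region where the likelihood ratio is at most $e^{\epsilon}$ (which contributes the $e^{\epsilon}$ factor in \eqref{Eq:DP}) and its complement (whose $F(r)$-mass is bounded by $\delta$). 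Standardizing $\mathcal{L}$, the event $\{\mathcal{L}\geq\epsilon\}$ becomes a Gaussian tail event; bounding it with the inequality $\Pr[Z \geq z] \leq e^{-z^2/2}$ together with the hypotheses $\sigma \geq \sqrt{2\ln(1.25/\delta)}\,\Delta_{\mathcal{M}}/\epsilon$ and $0 < \epsilon < 1$ yields $\Pr[\mathcal{L}\geq\epsilon]\leq\delta$. This last chain of estimates is exactly the algebra in the proof of the Gaussian mechanism in \cite{dwork2014algorithmic}, which I would reproduce with $\Delta = \Delta_{\mathcal{M}}$, and finally take the supremum over adjacent pairs $r, r'$ via the sensitivity bound.

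The main obstacle is the first step: justifying rigorously that the RKHS-valued Gaussian process mechanism collapses to the one-dimensional Gaussian mechanism. Concretely, one must verify that $\mu - \mu'$ is admissible, i.e., belongs to the Cameron--Martin space of $\mathcal{N}(0,\sigma^2 K)$ so that the two path measures are mutually absolutely continuous; otherwise the likelihood ratio, and hence $\epsilon$, would be infinite. I would handle this by appealing to the reproducing property of $K$ and the construction in \cite{wang2019privacy}: since $\mathcal{M}(\cdot)$ takes values in the RKHS $\mathcal{U}$ with kernel $K$, the difference $\mathcal{M}(r) - \mathcal{M}(r')$ automatically lies in the Cameron--Martin space of $\sigma^2 K$ with Cameron--Martin norm $\sigma^{-1}\|\mathcal{M}(r) - \mathcal{M}(r')\|_{\mathcal{U}}$, and the Cameron--Martin theorem then produces the stated scalar Gaussian density ratio. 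Once this reduction is in place, the remaining steps are the routine Gaussian-tail estimates described above, completing the proof.
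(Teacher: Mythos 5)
Your proposal is correct and follows essentially the same route as the proof the paper relies on: the paper itself gives no argument beyond citing \cite{wang2019privacy}, and the proof there (building on Hall, Rinaldo and Wasserman's functional Gaussian mechanism) is exactly your reduction --- the mean shift $\mathcal{M}(r)-\mathcal{M}(r')$ lies in the Cameron--Martin space of $\mathcal{N}(0,\sigma^2 K)$, the Radon--Nikodym derivative depends on a single linear functional, and the scalar Dwork--Roth tail algebra with $0<\epsilon<1$ and $\sigma \geq \sqrt{2\ln(1.25/\delta)}\,\Delta_{\mathcal{M}}/\epsilon$ finishes the bound. The only point worth making explicit is that the norm in Definition~\ref{Th:DP_sen} must be read as the RKHS norm of $\mathcal{U}$ (as you implicitly assume), since that is what makes the Cameron--Martin norm of the shift equal to $\sigma^{-1}\|\mathcal{M}(r)-\mathcal{M}(r')\|_{\mathcal{U}} \leq \Delta_{\mathcal{M}}/\sigma$.
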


\begin{proof} \label{Pf:Pro}
We refer readers to \cite{wang2019privacy} for the detailed proof.
\end{proof}


In the following two subsections, the details of our proposed algorithm and the theoretical analysis on its privacy and utility Guarantees will be elaborated on respectively.
\vspace{-0.5em}

\subsection{Differentially Private Deep Q-learning based Offloading (DP-DQO) Algorithm}
\vspace{-0.5em}
The details of our proposed DP-DQO algorithm are presented in Algorithm \ref{alg:algorithm1}.
At the beginning of DP-DQO, the experience replay buffer $\mathbb{D}$ is cleared out, the parameters of the Q-network $\theta$ are randomly initialized, and the parameters of the target Q-network are set as $\hat{\theta}=\theta$. When the initialization is completed, the algorithm goes into a loop and the learning process is divided into $\Gamma$ episodes, each of which comprises $T$ time slots. At beginning of each episode, the initial state $\mathcal{S}(1)$, sorted state sets $\{\mathbb S_0\}_{A\in\mathcal A}$, and noise dictionaries $\{\mathcal{G}_A\}_{A\in\mathcal A}$ are respectively initialized. Then, in time slot $t$, an action $A(t)$ is chosen by following the $\mathcal{E}$-greedy policy. After the chosen action is executed, the corresponding experience tuple $(\left(\mathcal{S}(t), A(t), R(\mathcal{S}(t), A(t)),\mathcal{S}(t+1)\right)$ is observed and further stored into the replay buffer $\mathbb{D}$.
\addtolength{\topmargin}{0.01in}

After $\Gamma_{eps}$ episodes are completed, functional noise generating and parameter updating are performed. Specifically, a mini-batch of $\Omega$ experience tuples are randomly sampled from the replay buffer, the set of which is denoted by $\{\left(\mathcal{S}_i, A_i,R_i,\mathcal{S}_{i}^{\prime}\right)| 1\leq i \leq \Omega\}$. Then, the sampled tuples are utilized in sequence to update the noise dictionaries and Q-network. Firstly, for the $i$-th experience tuple, the next state $\mathcal{S}_{i}^{\prime}$ is inserted into the set $\mathbb S_{A_i}$ in the ascending order of the reward. Secondly, a Gaussian process $\mathcal{N}\left(\mu_{A,\mathcal{S}_{i}^{\prime}}, \sigma d_{A,\mathcal{S}_{i}^{\prime}}\right)$ is constructed to generate noise for each action $A$ ($\forall A \in \mathcal{A}$)of state $\mathcal{S}_{i}^{\prime}$, where $\mu_{A,\mathcal{S}_{i}^{\prime}}$ and $\sigma d_{A,\mathcal{S}_{i}^{\prime}}$ respectively denote its mean and variance. 
The expressions of $\mu_{A,\mathcal{S}_{i}^{\prime}}$ and $d_{A,\mathcal{S}_{i}^{\prime}}$ are presented as (14) and (15), where $\Psi=(4 \alpha(z+1) / \Omega)^{-1}$ with $\alpha$, $z$, and $\Omega$ respectively denoting the learning rate, balance factor, and batch size. Besides, $\zeta=\left\|\mathcal{S}^{\prime}_{i}-\mathcal{S}^{\prime-}_{i}\right\|_{2}$, $\nu=\left\|\mathcal{S}^{\prime+}_{i}-\mathcal{S}^{\prime}_{i}\right\|_{2}$ and $\Lambda=\left\|\mathcal{S}^{\prime+}_{i}-\mathcal{S}^{\prime-}_{i}\right\|_{2}$, in which $\mathcal{S}^{\prime+}_{i}$ and $\mathcal{S}^{\prime-}_{i}$ are two adjacent states of $\mathcal{S}^{\prime}_{i}$ in the ordered set $\mathbb S_{A_i}$. And, $G_A(S)$ represents the value of the key $\mathcal S$ in the dictionary $\mathcal{G}_A$.
\begin{align} \label{Eq:Gaussian_p1}
	& \mu_{A,\mathcal{S}_{i}^{\prime}} \!= \!\frac{\left(e^{\Psi \zeta}-e^{-\Psi \zeta}\right) G_A\left(\mathcal{S}^{\prime+}_{i}\right) \!+ \!\left(e^{\Psi \nu}-e^{-\Psi \nu}\right) G_A\left(\mathcal{S}^{\prime-}_{i}\right)}{-e^{-\Psi \Lambda}+e^{\Psi \Lambda}}\\
	& d_{A,\mathcal{S}_{i}^{\prime}}=1-\frac{\left(e^{\Psi \zeta}-e^{-\Psi \zeta}\right) e^{\Psi \zeta}+\left(e^{\Psi \nu}-e^{-\Psi \nu}\right) e^{\Psi \nu}}{-e^{-\Psi \Lambda}+e^{\Psi \Lambda}}
\end{align}

\begin{algorithm}[t!]
	\caption{DP-DQO Algorithm}	
	\label{alg:algorithm1}
	\begin{algorithmic}[1]
		\STATE \textbf{Initialization:} Initialize replay buffer $\mathbb{D}$, Q-network $Q_\theta$, target Q-network $\hat Q_{\hat \theta}$, and training start time as $\Gamma_{eps}$.
		\FOR {$episode = 1, \Gamma$}
		\STATE Set the state $\mathcal{S}(1)=(0, 0, 0, M)$, state sets $\{\mathbb S_0\}_{A\in\mathcal A}=\{\emptyset\}_{A\in\mathcal A}$, and noise dictionaries $\{\mathcal{G}_A\}_{A\in\mathcal A}=\{\emptyset\}_{A\in\mathcal A}$.
		\FOR {$t=1,T$}
		\STATE \textbf{Action selection:} With probability $\mathcal{E}$ select a random action ${A}(t)$, otherwise choose the action ${A}(t)=\arg \max_{{A}\in \mathcal A} Q_{\theta}\left(\mathcal{S}(t), {A}(t)\right)$.
		\STATE \textbf{Acting and observing:} Execute the action $A(t)$, receive reward $R(\mathcal{S}(t), A(t))$, and obtain the new state $\mathcal{S}(t+1)$.
		\STATE \textbf{Refreshing replay buffer:} Store the new transition $(\left(\mathcal{S}(t), A(t),R(\mathcal{S}(t), A(t)),\mathcal{S}(t+1)\right)$ into $\mathbb{D}$.
		\IF{$episode > \Gamma_{eps}$}
		\STATE \textbf{Training:} Sample a mini-batch of transitions $\{\left(\mathcal{S}_i, A_i,R_i,\mathcal{S}_{i}^{\prime}\right)| 1\leq i \leq \Omega\}$ from $\mathbb{D}$. 
		\FOR {$i=1,\Omega$}
		\STATE Insert $\mathcal{S}_{i}^{\prime}$ into the sorted state set $\mathbb S_{A_i}$.
		\FOR {${A} \in \mathcal A$}
		\STATE Build a Gaussian process $\mathcal{N}\left(\mu_{A,\mathcal{S}_{i}^{\prime}}, \sigma d_{A,\mathcal{S}_{i}^{\prime}}\right)$  (14) and (15), sample the noise $g_{A,\mathcal{S}_{i}^{\prime}}$ $\sim\mathcal{N}\left(\mu_{A,\mathcal{S}_{i}^{\prime}}, \sigma d_{A,\mathcal{S}_{i}^{\prime}}\right)$, update the dictionary $\mathcal{G}_A$.
		\ENDFOR
		\STATE Calculate the target $y_i$ with (\ref{Eq:Target}).
		\ENDFOR
		\STATE Update the Q-network $Q_\theta$ with (\ref{Eq:SGD}).
		\STATE \textbf{Update target network:} $\hat{\theta}=\theta$ every $\Gamma_0$ episodes.
		\ENDIF
		\ENDFOR
		\ENDFOR
		\STATE \textbf{Output:} The Q-network $Q_\theta$.
	\end{algorithmic}
\end{algorithm}
For each state-action pair $(\mathcal{S}^{\prime}_{i}, A), \forall A \in \mathcal A$, the noise is generated according to $g_{A,\mathcal{S}_{i}^{\prime}}$, which is further utilized to update the dictionary, i.e., $\mathcal{G}_A = \mathcal{G}_A \bigcup \{\mathcal{S}_{i}^{\prime}, g_{A,\mathcal{S}_{i}^{\prime}}\}$, and
calculate the target $y_i$ with the target Q-network $\hat Q_{\hat \theta}$, i.e.,
\begin{align} \label{Eq:Target}
y_i\! = \! R(\mathcal{S}_i, A_i)+\gamma \arg \max_{{A}^{\prime} \in \mathcal A} \left(\hat{Q}_{\hat \theta}\left(\mathcal{S}_i^{\prime}, {A}^{\prime}\right) + {G}_{A^{\prime}}(\mathcal{S}_i^{\prime})\right).
\end{align}
Then, the average of the noised temporal-difference (TD) errors can be expressed
\begin{align}  \label{Eq:Loss_Fun}
	\mathcal{L} = \!1 / \Omega \sum\nolimits_{i=1}^{\Omega} \big[y_i-\left(Q_{\theta}\left(\mathcal{S}_i, A_i\right)+{G}_{A_i}(\mathcal{S}_i)\right)\big]^{2}
\end{align}
with which the Q-network $Q_{\theta}$ can updated according to
\begin{align} \label{Eq:SGD}
	\theta = \theta-\alpha\nabla_{\theta} \mathcal{L}.
\end{align}
During the training phase, the target Q-network $\hat Q_{\hat \theta}$ is updated by setting $\hat{\theta}=\theta$ every $\Gamma_0$ epochs.

\subsection{Privacy and Utility Guarantees of DP-DQO Algorithm}
In this subsection, we analyze both the privacy guarantee and utility guarantee for our proposed DP-DQO algorithm, as shown in the following two theorems.
\begin{theorem}  \label{Th:Privacy_analysis}
(Privacy Guarantee)
The action-value function learned by DP-DQO is $\left(\epsilon, \delta+ \exp \left(-(2 z-8.68 \sqrt{\Psi} \sigma)^{2} / 2\right)\right)$
-DP, if 
$2 z > 8.68 \sqrt{\Psi} \sigma$ and 
$\nonumber \sigma \geq J(\alpha, z, D, \Omega) \Delta_{F}$
$\sqrt{2((\Gamma-\Gamma_{eps}) T / \Omega) \ln (e+\epsilon / \delta)} / \epsilon$, in which $J(\alpha, z, D, \Omega)$ $= \left((4 \alpha(z+1) / \Omega)^{2}+4 \alpha(z+1) / \Omega\right) D^{2}$, $\Psi=(4 \alpha(z+1)$ $ / \Omega) ^{-1}$, $D$ denotes the Lipschitz constant of the action-value function approximation, $\Omega$ the size of mini-batch, and $(\Gamma-\Gamma_{eps}) T$ the total number of steps in training. Wherein, the two state-action pairs are called adjacent if their "distance", evaluated by the difference of their corresponding rewards (i.e., $R$ and $R^{\prime}$), is less than the sensitivity of the FGPM $\Delta_{F}$, i.e., $\left\|R-R^{\prime}\right\|_1 \leq \Delta_{F}$.
\end{theorem}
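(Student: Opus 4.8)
The plan is to adapt the functional-Gaussian-noise argument behind Proposition~\ref{Th:G_DP} (from \cite{wang2019privacy}) to the noised DQN update of DP-DQO, in three stages: bound the $\ell_2$-sensitivity of one training step in terms of the reward-adjacency parameter $\Delta_F$; upgrade the single-step Gaussian-process mechanism to an $(\epsilon_0,\delta_0)$-DP guarantee while paying for the shrinkage of the Gaussian-process conditional variance; and compose over all $(\Gamma-\Gamma_{eps})T$ training steps. The released object is the (noised) action-value function, and two executions are adjacent when the induced reward sequences satisfy $\|R-R'\|_1\le\Delta_F$.

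For the per-step sensitivity I would write $\nabla_\theta\mathcal L$ for the loss (\ref{Eq:Loss_Fun}) explicitly, so that one update $\theta\leftarrow\theta-\alpha\nabla_\theta\mathcal L$ moves the parameters by $(2\alpha/\Omega)\sum_i e_i\,\nabla_\theta Q_\theta(\mathcal S_i,A_i)$ with $e_i$ the $i$-th noised TD error. A reward perturbation of $\ell_1$-size $\Delta_F$ changes the affected $e_i$ by at most $\Delta_F$ and, through the induced change in $\theta$, also perturbs the value estimates feeding the next step's errors; bounding $\|\nabla_\theta Q_\theta\|$ and the function-value variation by the Lipschitz constant $D$, bounding the noised TD error magnitude by the balance factor $z$ (up to the normalised noise), and tracking both effects, gives an $\ell_2$-sensitivity of one step of the form $J(\alpha,z,D,\Omega)\Delta_F$ with $J=\big((4\alpha(z+1)/\Omega)^2+4\alpha(z+1)/\Omega\big)D^2$. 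Since $4\alpha(z+1)/\Omega=\Psi^{-1}$, this reads $J=(\Psi^{-2}+\Psi^{-1})D^2$, the quadratic term coming from the second-order coupling (perturbed error times perturbed gradient) and the linear term from the direct perturbation of the error; matching the noise-kernel length scale to $\Psi^{-1}$ is what makes $\Psi$ appear in (\ref{Eq:Gaussian_p1}).

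Next, by Proposition~\ref{Th:G_DP} the Gaussian-process noise $g\sim\mathcal N(0,\sigma^2 K)$ makes one step $(\epsilon_0,\delta_0)$-DP provided $\sigma\ge\sqrt{2\ln(1.25/\delta_0)}\,J(\alpha,z,D,\Omega)\Delta_F/\epsilon_0$ \emph{and} the noise has marginal variance $\sigma^2$ at each queried point; the latter fails here because, to keep the released function consistent, the noise at a freshly inserted state is drawn from the Gaussian-process posterior conditioned on the noise already stored at its sorted neighbours $\mathcal S^{\prime+}_i,\mathcal S^{\prime-}_i$, so its variance is only $\sigma d_{A,\mathcal S_i'}$ with $d_{A,\mathcal S_i'}\le1$ given by (\ref{Eq:Gaussian_p1}). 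I would therefore lower-bound $d_{A,\mathcal S_i'}$ using the spacing of the inserted states together with a Gaussian tail estimate in which $\sqrt{\Psi}$ controls how sharply conditioning can concentrate: the event that some $d_{A,\mathcal S_i'}$ drops below the level needed to retain $(\epsilon_0,\delta_0)$-DP has probability at most $\exp(-(2z-8.68\sqrt{\Psi}\,\sigma)^2/2)$, which is a genuine bound exactly when $2z>8.68\sqrt{\Psi}\,\sigma$; on the complement of this event the step is $(\epsilon_0,\delta_0)$-DP, and on the event the whole probability is charged to the relaxation term. Composing over the training phase --- after the mini-batch structure is absorbed, $(\Gamma-\Gamma_{eps})T/\Omega$ effective compositions --- a strong composition theorem in the form giving the $\ln(e+\epsilon/\delta)$ factor yields, with a suitable split of $(\epsilon,\delta)$ across steps, $\epsilon_0\asymp\epsilon\big/\sqrt{2\big((\Gamma-\Gamma_{eps})T/\Omega\big)\ln(e+\epsilon/\delta)}$; substituting this into the per-step requirement reproduces exactly the stated condition on $\sigma$, and summing the uniformly bounded per-step failure probabilities gives the relaxation factor $\delta+\exp(-(2z-8.68\sqrt{\Psi}\,\sigma)^2/2)$, hence the claimed $(\epsilon,\ \delta+\exp(-(2z-8.68\sqrt{\Psi}\,\sigma)^2/2))$-DP.

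I expect the crux to be the lower bound on the Gaussian-process conditional variance $d_{A,\mathcal S_i'}$. Unlike ordinary DP-SGD, consistency of the released function forces new noise to be conditioned on old noise, which can collapse the effective variance whenever freshly sampled states cluster in the reward ordering; quantifying the probability of such clustering --- and showing it is precisely the $\exp(-(2z-8.68\sqrt{\Psi}\,\sigma)^2/2)$ term with the correct numerical constant --- is the delicate step. By contrast, the one-step sensitivity computation and the advanced-composition bookkeeping are routine once the adjacency notion $\|R-R'\|_1\le\Delta_F$ and the Lipschitz constant $D$ are fixed.
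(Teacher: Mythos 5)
Your overall scaffolding (per-step sensitivity in terms of $\Delta_F$, the functional Gaussian mechanism of Proposition~1 for one update, then advanced composition over the $(\Gamma-\Gamma_{eps})T/\Omega$ mini-batch updates via \cite{kairouz2015composition}) matches the paper's route, but you misidentify the source of the extra relaxation term, and this is a genuine gap rather than a cosmetic difference. In the paper's proof the event charged to $\exp\left(-(2z-8.68\sqrt{\Psi}\sigma)^2/2\right)$ is \emph{large noise}, not small conditional variance: one first bounds the change of the learned Q-function across one update by $\left\|Q-\widetilde{Q}\right\|_1\le \alpha D\left(2+G_A(\mathcal S_i^{\prime})-G_A(\mathcal S_i)\right)/\Omega$, and then invokes Lemma~8 of \cite{wang2019privacy} (a tail bound on the supremum of the Gaussian-process noise, which is where the constant $8.68$ and the factor $\sqrt{\Psi}\sigma$ come from) to conclude that the noise terms are at most of order $z$, hence $\left\|Q-\widetilde{Q}\right\|_1\le 2\alpha D(z+1)/\Omega$, with probability at least $1-\exp\left(-(2z-8.68\sqrt{\Psi}\sigma)^2/2\right)$. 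That is why the condition $2z>8.68\sqrt{\Psi}\sigma$ appears and why the failure probability is added to $\delta$: if the noise exceeds the threshold $z$, the per-step sensitivity bound itself breaks. Your proposal instead treats the $z$-bound on the noised TD error as essentially deterministic and attributes the exponential term to a ``collapse'' of the conditional variance $d_{A,\mathcal S_i^{\prime}}$ when freshly inserted states cluster. That event is not the mechanism and would not be controlled by $z$ in any case: the conditional sampling with mean $\mu_{A,\mathcal S_i^{\prime}}$ and variance $\sigma d_{A,\mathcal S_i^{\prime}}$ is merely the incremental realization of a single consistent draw from $\mathcal N(0,\sigma^2K)$, so Proposition~1 is applied to the functional release as a whole and no per-point marginal-variance condition is needed; a proof organized around lower-bounding $d_{A,\mathcal S_i^{\prime}}$ would not go through and would not reproduce the stated constants.

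A second, smaller gap is in the sensitivity step. The quantity that must be fed into Proposition~1 is the RKHS-norm sensitivity of the learned action-value function, not an $\ell_2$ sensitivity of the parameter update. The paper gets it by combining the sup-norm bound $\left\|Q-Q^{\prime}\right\|_1\le 4\alpha D(z+1)/\Omega$ (triangle inequality over the two adjacent runs) with the Lipschitz constant $D$ through Lemma~6 of \cite{wang2019privacy}, giving $\|d\|_{\mathcal H}^2\le (1+\Psi/2)(4\alpha D(z+1)/\Omega)^2+D^2/(2\Psi)$, and then choosing $1/\Psi=4\alpha(z+1)/\Omega$ to arrive at the expression defining $J(\alpha,z,D,\Omega)$; this is where the two terms of $J$ actually come from (a squared sup-norm term and a $D^2/(2\Psi)$ smoothness term), not from a ``perturbed error times perturbed gradient'' second-order coupling. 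Your gradient-level bookkeeping skips the sup-norm-to-RKHS conversion entirely, and without it the Gaussian-process mechanism cannot be invoked.
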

\begin{proof}   \label{PR:privacy}
Let $Q$ and $Q^{\prime}$ respectively denote the estimated Q-values given two adjacent state-action pairs, the corresponding rewards of which satisfy $\left\|R-R^{\prime}\right\|_1 \leq \Delta_{F}$. To establish the DP guarantee, we will check the update step in line 17 of Algorithm 1. Firstly, let $\widetilde{Q}$ denote the Q-value before updating the $Q$-network. Then, we have
\begin{align}
	\left\|Q-\widetilde{Q}\right\|_{1} \leq \alpha 	D\left(2+G_A\left(\mathcal{S}_i^{\prime}\right)-G_A\left(\mathcal{S}_i\right)\right) / \Omega.
\end{align}
From Lemma 8 of \cite{wang2019privacy}, $\left\|Q-\widetilde{Q}\right\|_{1} \leq 2\alpha D(z+1) / \Omega$ is satisfied with probability at least $1-\exp \left(-(2 z-8.68 \sqrt{\Psi} \sigma)^{2} / 2\right)$. Similarly, we have $\left\|Q^{\prime}-\widetilde{Q}\right\|_{1} \leq 2\alpha D(z+1) / \Omega$ is satisfied with probability at least $1-\exp \left(-(2 z-8.68 \sqrt{\Psi} \sigma)^{2} / 2\right)$. Furthermore, by the triangle inequality, it can be derived that $\left\|Q-Q^{\prime}\right\|_{1} \leq 4 \alpha D(z+1) / \Omega$ for any $\left\|R-R^{\prime}\right\|_1 \leq \Delta_{F}$. Next, by defining $d=Q-Q^{\prime}$ and resorting to Lemma 6 in \cite{wang2019privacy}, we have
\vspace{-0.5em}
\begin{align} \label{Eq:D_RKHS Norm_Org}
	\|d\|_{\mathcal{H}}^{2} \leq(1+\Psi / 2)(4 \alpha D(z+1) / \Omega)^{2}+D^{2} / 2 \Psi
\end{align}
where $\|\cdot\|_\mathcal{H}$ denotes the RKHS norm. By setting $1/\Psi = 4 \alpha (z+1) / \Omega$, we can transform (\ref{Eq:D_RKHS Norm_Org}) as
\begin{align} \label{Eq:D_RKHS Norm}
\|d\|_{\mathcal{H}}^{2} \leq ((4 \alpha (z+1) / \Omega)^{2}+4 \alpha (z+1) / \Omega)D^{2}.
\end{align}

Referring to Definition 3 and Proposition 1, we inject noise sampled by the Gaussian process to $Q$ that makes the update step achieve $\left(\epsilon, \delta+ \exp \left(-(2 z-8.68 \sqrt{\Psi} \sigma)^{2} / 2\right)\right)$-DP, given that $\sigma \geq \sqrt{2 \ln \left(1.25 / \delta \right)}\|d\|_{\mathcal{H}} / \epsilon$.
This shows that each iteration of update has a privacy guarantee. Finally, according to the composition theorem in \cite{kairouz2015composition}, multiple iterations in our proposed algorithm provide a 
privacy guarantee. \QEDB
\end{proof}

Next, we establish the utility guarantee of DP-DQO by analyzing the discrepancy between the action-value of some state-action pairs learned by our algorithm and that regarding the optimal policy, i.e., the learning error (utility loss) \cite{wang2019privacy}.
In Theorem 2, we provide the upper bound of the learning error for our algorithm, which tends towards zero as the size of the state space approaches infinity, i.e., a utility guarantee is achieved.
\begin{theorem} \label{Th:Convergence}
(Utility Guarantee)
Let $Q$ and $Q^{*}$ respectively denote the action-value of some state-action pairs learned by our algorithm and that regarding the optimal policy, and $n$ the cardinality of the state space $\mathbb S$, i.e., $|\mathbb S|=n$. In the case $n<\infty$, and $\gamma < 1$, the utility loss (learning error) of the algorithm satisfies
\vspace{-0.5em}
\begin{align}
	\mathbb{E}\left[\left\|Q-Q^{*}\right\|_{1}\right] \leq \frac{2 \sqrt{2} \sigma}{\sqrt{n \pi}(1-\gamma)}.
\end{align}
\end{theorem}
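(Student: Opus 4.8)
The plan is to show that the only source of suboptimality in the $Q$ returned by DP-DQO is the functional noise injected by the FGPM, and then to control the size of that noise using (i) the contraction property of the Bellman optimality operator and (ii) the fact that the Gaussian-process conditional variance at each visited state decays as the $n$ states of $\mathbb{S}$ pack the bounded state space more densely. Concretely, the first step is to identify the fixed point that the inner loop targets: with $\hat\theta=\theta$ after an update cycle and the function-approximation and optimization error neglected, the target \eqref{Eq:Target} and loss \eqref{Eq:Loss_Fun} force the learned $Q$ to satisfy the noise-perturbed Bellman equation $Q(\mathcal{S},A)=R(\mathcal{S},A)+\gamma\,\mathbb{E}_{\mathcal{S}'}[\max_{A'}(Q(\mathcal{S}',A')+G_{A'}(\mathcal{S}'))]-G_A(\mathcal{S})$. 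Writing $\mathcal{T}$ for the noise-free Bellman optimality operator, a $\gamma$-contraction in $\|\cdot\|_{\infty}$ with unique fixed point $Q^{*}$, this exhibits $Q$ as the fixed point of $\mathcal{T}$ perturbed, at each application, by a term of magnitude at most $(1+\gamma)\|G\|_{\infty}$.

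Second, I would unroll. Iterating the perturbed operator from an arbitrary initialization and using the contraction gives $\mathbb{E}\|Q-Q^{*}\|_{1}\le\sum_{k\ge0}\gamma^{k}\,\mathbb{E}[\text{per-step noise}]=\frac{1}{1-\gamma}\,\mathbb{E}[\text{per-step noise}]$, so it remains to bound the expected per-step noise, i.e.\ to show $\max_{A}\mathbb{E}\,|G_{A}(\mathcal{S})|\le\sqrt{2}\,\sigma/\sqrt{n\pi}$ and to absorb the remaining constant $\le 2$ into the numerator $2\sqrt{2}$.

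Third, and this is the crux, I would bound the per-state noise. Each sampled $g_{A,\mathcal{S}_i'}$ has law $\mathcal{N}(\mu_{A,\mathcal{S}_i'},\sigma d_{A,\mathcal{S}_i'})$, where $d_{A,\mathcal{S}_i'}$ is the conditional variance of the exponential-kernel Gaussian process at $\mathcal{S}_i'$ given its two sorted neighbours at distances $\zeta$ and $\nu$, and $\mu_{A,\mathcal{S}_i'}$ (cf.\ \eqref{Eq:Gaussian_p1}) is the corresponding interpolating mean. Since $|\mathbb{S}|=n$ and $\mathbb{S}$ is bounded, the consecutive gaps in the sorted order are $O(1/n)$; substituting into the variance formula and expanding for small $\Psi\zeta,\Psi\nu$ shows $d_{A,\mathcal{S}_i'}\le 2/n$ up to a constant depending on $\Psi$ and the diameter of $\mathbb{S}$, while $\mu_{A,\mathcal{S}_i'}$ is a sub-convex combination of neighbouring noise values and is governed by the same scale. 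Then $\mathbb{E}\,|G_{A}(\mathcal{S})|\le\sqrt{2\sigma d_{A,\mathcal{S}}/\pi}\le\sqrt{2}\,\sigma/\sqrt{n\pi}$ via $\mathbb{E}\,|\mathcal{N}(0,v)|=\sqrt{2v/\pi}$, and combining with the contraction bound yields $\mathbb{E}\|Q-Q^{*}\|_{1}\le\frac{2\sqrt{2}\,\sigma}{\sqrt{n\pi}(1-\gamma)}$, which tends to $0$ as $n\to\infty$.

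The main obstacle is this last step. Making "consecutive gaps are $O(1/n)$" precise requires either assuming the visited states do not cluster or arguing in the worst case over the sorted order, and, more delicately, the mean $\mu_{A,\mathcal{S}_i'}$ is built recursively from previously sampled and mutually correlated noise values, so it does not transparently inherit the $O(1/\sqrt{n})$ scale; controlling it is where I would lean on the functional-noise lemmas of \cite{wang2019privacy}. A secondary point to settle is that the DQN inner loop has effectively reached the perturbed fixed point, so that approximation and optimization errors are dominated by the noise term.
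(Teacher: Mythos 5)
There is a genuine gap, and it sits exactly where you flagged your "main obstacle": the per-state noise does \emph{not} shrink as $1/\sqrt{n}$. The quantities $\mu_{A,\mathcal{S}'_i}$ and $\sigma d_{A,\mathcal{S}'_i}$ in (14)--(15) are the \emph{conditional} mean and variance used to sample one consistent Gaussian-process path point by point; the kernel has unit diagonal, so the \emph{marginal} law of the realized noise $G_A(\mathcal{S})$ at any state is $\mathcal{N}(0,\sigma^2)$ (up to the paper's $\sigma$-vs-$\sigma^2$ notation) no matter how densely the $n$ states pack the space. When neighbours are close, $d_{A,\mathcal{S}'_i}$ is indeed small, but the conditional mean then tracks the neighbouring noise values, which are themselves of size $O(\sigma)$ -- the two effects exactly compensate, which is why leaning on the functional-noise lemmas of \cite{wang2019privacy} cannot rescue the claim $\mathbb{E}|G_A(\mathcal{S})|\le\sqrt{2}\,\sigma/\sqrt{n\pi}$; the correct value is $\approx\sqrt{2/\pi}\,\sigma$, independent of $n$. (A sanity check: if the injected noise at each visited state vanished as $n\to\infty$, the $(\epsilon,\delta)$-DP guarantee of Theorem~1, which requires noise at level $\sigma$, would be vacated.) There is also an arithmetic slip: with variance $\sigma d$ and $d\le 2/n$, $\mathbb{E}|\mathcal{N}(0,\sigma d)|=\sqrt{2\sigma d/\pi}$ scales like $\sqrt{\sigma}/\sqrt{n}$, not like $\sigma/\sqrt{n}$, so even granting your variance bound the stated inequality does not follow.

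The $1/\sqrt{n}$ in the theorem comes from a different mechanism, namely cancellation of the zero-mean noise when it is \emph{averaged over the $n$ states}, not from per-state noise decay: the absolute value of an average of $n$ (approximately independent) $\mathcal{N}(0,\sigma^2)$ variables has expectation $\sqrt{2}\,\sigma/\sqrt{n\pi}$, and propagating this averaged perturbation through the $\gamma$-contraction of the Bellman operator (your first two steps, which are fine in outline) produces the factor $2/(1-\gamma)$ and hence exactly $\frac{2\sqrt{2}\,\sigma}{\sqrt{n\pi}(1-\gamma)}$. This is the route taken in the analysis the paper defers to (the paper itself gives no derivation and simply cites \cite{wang2019privacy}); note it also requires reading $\|Q-Q^*\|_1$ as an averaged/aggregated error rather than a per-state or summed error, whereas your unrolling treats the per-step $\ell_1$ perturbation as a single per-state noise magnitude. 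To repair your argument you would keep the perturbed-fixed-point and contraction skeleton, but replace the third step by a bound on the expected absolute \emph{mean} of the injected noise across the $n$ states (using the marginal $\mathcal{N}(0,\sigma^2)$ law and independence or a covariance bound from the kernel), rather than a bound on the conditional variance at each state.
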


\begin{proof} \label{Pf:Convergence}
	Due to space limitations, we omit the derivations and refer readers to \cite{wang2019privacy} for the detailed proof. \QEDB
\end{proof}

\section{Simulation results} \label{Sec:Section 4}
In this section, we conduct simulations to evaluate the performance of our proposed DP-DQO algorithm. Consider an MEC system consisting of a CS, an ES, and $N=5$ MDs. The length of the time slot is set as $\Delta_T=1$. The data size and the required CPU cycles for arriving tasks are uniformly generated between $[5, 50]$ MB and $[0.5 \times 10^{11}, 2.0 \times 10^{11}]$ cycles, respectively. Besides, we respectively set the size of the TRQ and LCQ as $q^{TR}_{max}=5$ GB and $q^{LC}_{max}=2$ GB, and the computation capacity and effective capacitance coefficient of the ES as $f^{\mathrm{e}}=5 \times 10^{10}$ and $\kappa_1={10}^{-11}$ \cite{Zhang2013}.
For the link from the ES to CS, we adopt the channel model and parameters as in \cite{8387798} and correspondingly set the transmission rate as $r^{tr}=5$ MB/s.
The performance of our proposed DP-DQO algorithm is evaluated against two baseline algorithms: 1) the \textit{Greedy algorithm} and 2) the \textit{DQN-based algorithm}. For DRL algorithms, each artificial neural network (ANN) consists of two fully connected hidden layers, each with 128 neurons, where the ReLU function is adopted as the activation function. We set the length of one episode as $T = 100$, the size of replay buffer $\mathbb{D}=2000$, the mini-batch size $\Omega=64$, and the target network update frequency $\Gamma_0=10$. To achieve exploration, we adopt the $\mathcal{E}$-greedy policy with $\mathcal{E}=0.02$. The learning rate $\alpha$ and discount factor $\gamma$ are set to $0.002$ and $0.98$, respectively. Here, all simulation results are obtained by averaging 10 independent runs with different seeds, and for fair comparisons, the same seed is adopted for all algorithms and policies in one run.

\begin{figure} [!t]  \label{Fig:Convergence}
	\centering
	\leavevmode \epsfxsize=2.8in  \epsfbox{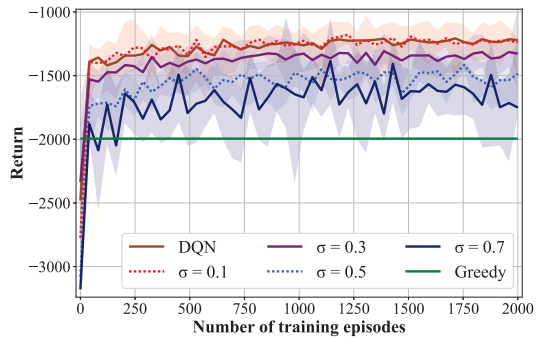}
	\vspace{-0.5em}
	\centering \caption{Convergence comparison for DP-DQO at noise levels $\sigma=\{0.1,0.3,0.5,0.7\}$ and two baseline algorithms.} \label{Fig:Converge_Com}
	\vspace{-0.5em}
\end{figure}

Firstly, we evaluate the convergence of DP-DQO with different noise levels, as shown in Fig. 2. The dark curve (solid or dotted) shows the mean value over runs, and the shaded areas are obtained by filling the interval between the maximum and minimum values over runs. In Fig. 2, it can be seen that the training performance is negatively correlated with the noise level $\sigma$. This is mainly due to the fact that a higher level of privacy protection would reduce the difference between the Q-values of adjacent actions more significantly. In return, this may cause more uncertainty in the Q-network's update direction, thereby making the algorithm's performance more unstable. To further evaluate the effectiveness of DP-DQO, we vary the task arrival rate from $0.1$ to $0.4$, and present the achieved return in Fig. 3. It can be seen that our proposed DP-DQO algorithm outperforms the greedy algorithm in all cases. Besides, it is noteworthy that when the noise level is not high, e.g., $\sigma=0.1$, the return achieved by our proposed DP-DQO algorithm is roughly the same as that of the DQN-based algorithm. In other words, by suitably choosing the noise level, a high utility guarantee (low learning error) can be achieved while preserving the PP protection during the computation offloading process. 

\begin{figure} [!t]
	\centering
	\leavevmode \epsfxsize=2.8in  \epsfbox{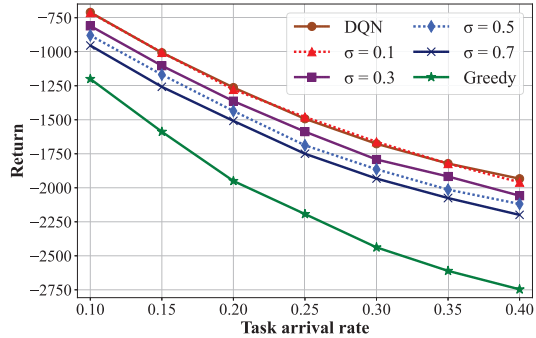}
	\vspace{-0.5em}
	\centering \caption{Performance comparison for DP-DQO at noise levels $\sigma=\{0.1,0.3,0.5,0.7\}$ and two baseline algorithms, in terms of the return where the task arrival rate from 0.1 to 0.4.} \label{Fig:change_N}
	\vspace{-1.5em}
\end{figure}


\vspace{-0.5em}
\section{Conclusions} \label{Sec:Section 5}
In this paper, we have investigated and proposed a computation offloading strategy for the MEC system which not only minimizes the offloading cost, consisting of the latency, energy consumption of the edge server, and task dropping rate but also preserves Pattern Privacy (PP) during the offloading process. This prevents attackers from inferring the edge server's queuing information and users' usage patterns through inverse reinforcement learning and other techniques when they observe the offloading decisions. By modifying the vanilla DQN with the Function-output Gaussian Process Mechanism, a novel PP-preserving dynamic computation offloading algorithm, called DP-DQO, was proposed. For DP-DQO, we provided theoretical analysis on both its privacy guarantee and utility (learning error) guarantee. Our simulation results show that DP-DQO with a suitable noise level performs as well as the DQN-based algorithm in terms of its achieved return and significantly outperforms the greedy algorithm.

\section*{Acknowledgments}
\vspace{-0.25em}
This paper was supported by the National Natural Science Foundation of China (62271413) and the Chinese Universities Scientific Fund (2452017560). 
\vspace{-0.25em}

\bibliographystyle{IEEEtran}
\bibliography{IEEEabrv,Privacy_offloading_Ref}

\end{document}